\newcommand{\remove}[1]{{}}
\newcommand{\ABox}{
\raisebox{3pt}{\framebox[6pt]{\rule{6pt}{0pt}}}
}
\newenvironment{proof}{{\bf Proof:}}{\hfill\ABox}
\newtheorem{theorem}{{\bf Theorem}}
\newtheorem{lemma}{Lemma}
\newcommand{\lemlab}[1]{\label{lemma:#1}}
\newcommand{\thmlab}[1]{\label{thm:#1}}
\newcommand{\figlab}[1]{\label{fig:#1}}
\newcommand{\seclab}[1]{\label{sec:#1}}
\newcommand{\lemref}[1]{\ref{lemma:#1}}
\newcommand{\figref}[1]{\ref{fig:#1}}
\def\P{{\mathcal P}}
\def\C{{\mathcal C}}
\def\bcC{{\partial \mathcal C}}
\def\bC{{\partial C}}
\def\cF{{\mathcal F}}
\def\g{{\gamma}}
\def\F{{\Phi}}
\def\o{{\omega}}
\def\O{{\Omega}}
\def\D{{\Delta}}
\def\d{{\delta}}
\def\e{{\varepsilon}}
\def\q{{\theta}}
\newcommand{\squeezelist}{\setlength{\itemsep}{0pt}}
\title{Addendum to:\\
Edge-Unfolding Nearly Flat Convex Caps}
\author{
Joseph O'Rourke%
    \thanks{Department of Computer Science, Smith College, Northampton, MA, USA.
      \protect\url{jorourke@smith.edu}.}
}
\begin{document}
\maketitle

\begin{abstract}
This addendum to~\cite{o-eunfcc-17} establishes that 
a nearly flat acutely triangulated convex cap in the sense of that paper can be edge-unfolded
even if closed to a polyhedron by adding the convex polygonal base under the cap.
\end{abstract}

\section{Introduction}
\seclab{Introduction}
The paper~\cite{o-eunfcc-17} established that every sufficiently flat acutely triangulated
convex cap has an edge-unfolding to a non-overlapping simple polygon, i.e., a \emph{net}.
I used the term ``convex cap'' in the following sense
(where $\phi(f)$ is
the angle the normal to face $f$ makes with the $z$-axis):
\begin{quotation}
\noindent
Define a \emph{convex cap} $\C$ of angle $\Phi$ to be $C=\P \cap H$
for some $\P$ and $H$, such that $\phi(f) \le \Phi$ for all $f$ in $\C$.
[...]
Note that $\C$ is not a closed polyhedron; it has no ``bottom,''
but rather a boundary $\bcC$.
\end{quotation}
This note proves that same claim holds even when $\C$ is closed to a polyhedron
by adjoining the convex base face $B$ bounded by $\bcC$.
Eventually this addendum will be incorporated into a future version~\cite{o-eunfcc-17}.
For now we assume familiarity with that paper, and especially the 
section below, the most relevant portions of which we reproduce verbatim. Ellisons are 
marked by ``[...].''

\section{Angle-Monotone Spanning Forest}
\seclab{AngMonoForest}
%
%
%

\noindent
[...]

\subsection{Angle-Monotone Spanning Forest}
\seclab{SpanningForest}
``It was proved
in~\cite{lo-ampnt-17}
that every nonobtuse triangulation $G$ of a convex region $C$
has a boundary-rooted spanning forest $F$ of $C$, with all paths in $F$
$90^\circ$-monotone.
We describe the proof and simple construction algorithm before detailing
the changes necessary for acute triangulations.

Some internal vertex $q$ of $G$ is selected, and the plane partitioned into
four $90^\circ$-quadrants $Q_0,Q_1,Q_2,Q_3$ by orthogonal lines through $v$.
Each quadrant is closed along one axis and open on its counterclockwise axis;
$q$ is considered in $Q_0$ and not in the others, so the quadrants partition the plane.
It will simplify matters later if we orient the axes so that no vertex except
for $q$ lies on the axes, which is clearly always possible.
Then paths are grown within each quadrant independently, as follows.
A path is grown from any vertex $v \in Q_i$ not yet included in the forest $F_i$,
stopping when it reaches either a vertex already in $F_i$, or $\bC$.
These paths never leave $Q_i$, and result in a forest $F_i$ spanning the vertices in $Q_i$ .
No cycle can occur because a path is grown from $v$ only when $v$ is not already
in $F_i$; so $v$ becomes a leaf of a tree in $F_i$.
Then $F = F_1 \cup F_2 \cup F_3 \cup F_4$.

We cannot follow this construction exactly in our situation of an acute triangulation $G$,
because the ``quadrants" for $\q$-monotone paths for $\q = 90^\circ - \D\q < 90^\circ$ cannot cover the plane
exactly: They leave a thin $4 \D\q$ angular gap; call the cone of this aperature $g$.
We proceed as follows.
Identify an internal vertex $q$ of $G$ so that it is possible to
orient the cone-gap $g$, apexed at $q$, so that $g$ contains no internal vertices of $G$.
See Fig.~\figref{QuadGap} for an example. Then we proceed just
as in~\cite{lo-ampnt-17}: paths are grown within each $Q_i$, forming four
forests $F_i$, each composed of $\q$-monotone paths.
\begin{figure}[htbp]
\centering
\includegraphics[width=0.6\linewidth]{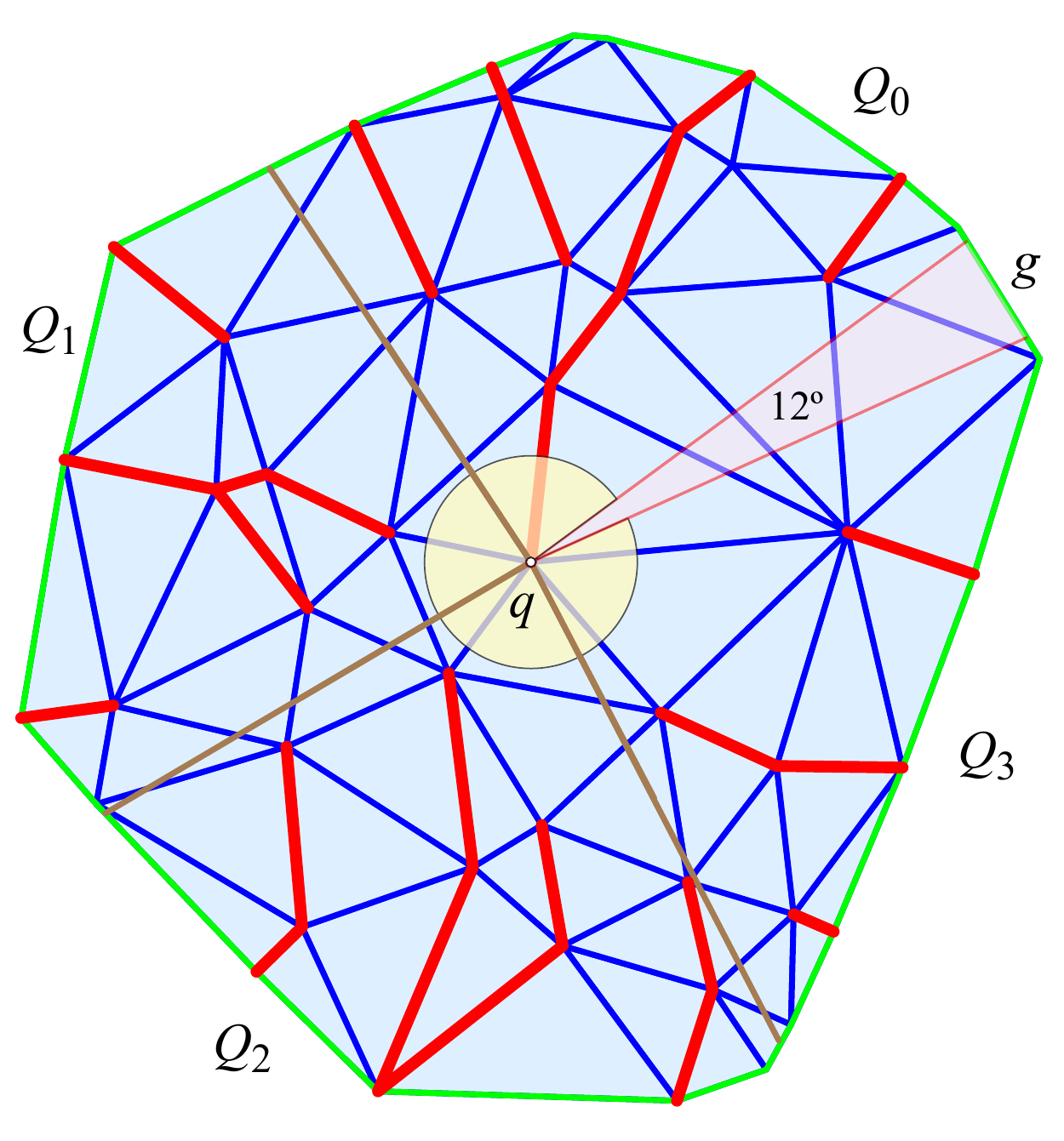}
\caption{Here the near-quadrants $Q_i$ have width $\q=87^\circ$, 
so the gap $g$ has angle $4\D\q  = 12^\circ$.}
\figlab{QuadGap}
\end{figure}

It remains to argue that there always is such a $q$ at which to apex cone-gap $g$.
Although it is natural to imagine $q$ as centrally located (as in Fig.~\figref{QuadGap}),
it is possible that $G$ is so dense with vertices that such a central location is not possible.
However, it is clear that the vertex $q$ that is closest to $\bC$ will suffice: aim $g$
along the shortest path from $q$ to $\bC$. Then $g$ might include several vertices on $\bC$,
but it cannot contain any internal vertices of $G$, as they would be closer to $\bC$.
Again we could rotate the axes slightly so that no vertex except for $q$ lies on an axis.''

%

\noindent
[...] (End quoted text.)

\section{Unfolding the Base $B$}
\seclab{Base}
By our definition of a convex cap, its boundary $\bcC$ lies in a plane, and so bounds
a convex polygonal base $B$.
We assume that, unlike the cap $\C$, $B$ is not triangulated, and so must be unfolded
as an intact unit.
(Of course, it can be unfolded as a unit even if triangulated.)

Let $C_\perp$ be the unfolded net of $\C$ produced by the algorithm in~\cite{o-eunfcc-17}.
If some edge $e$ of $C_\perp$ lies on the convex hull of $C_\perp$, then $B$ can be
``flipped out'' to $B'$ around $e$ by cutting all edges of $\bcC$ except for $e$.
Because $B'$ is convex and is attached to the hull of $C_\perp$, it is clear there is
no overlap, and we would be finished. In fact this is the proof path we will follow, but it is
not as straightforward as it might seem.

\subsection{Obstructions}
\seclab{Obstructions}
We now argue that there is an arbitrarily flat convex cap $\C$ and a spanning
cut forest $\cF$ such that there is no such edge $e$ of  $C_\perp$
to which to attach $B'$ without overlap.
First, we look at a ``real'' unfolding to see what form the obstruction might take.
Fig.~\figref{AM_20_30_s2_v83_ell_Hull} shows a portion of $C_\perp$, 
identifying a particular edge $e$ which is tilted inside the hull and would lead
to overlap were $B'$ attached there.
\begin{figure}[htbp]
\centering
\includegraphics[width=0.75\linewidth]{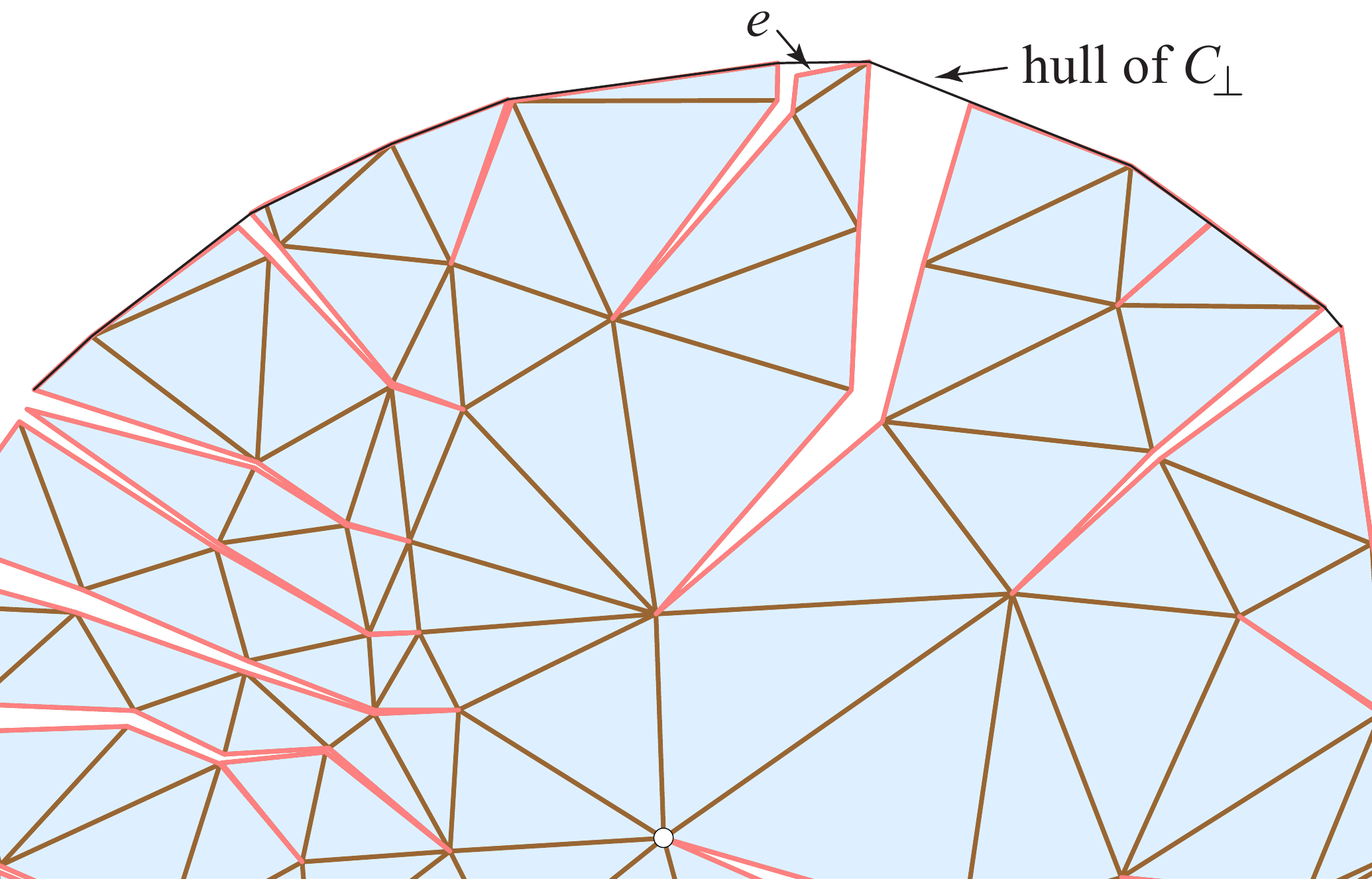}
\caption{Detail from Fig.~24 of~\protect\cite{o-eunfcc-17}, with
a portion of the convex hull marked.}
\figlab{AM_20_30_s2_v83_ell_Hull}
\end{figure}
However, even in this example, there are many other candidates for $e$ that would suffice as
$B$'s attachment. This suggests the next question: Is there a cap $\C$ 
and a cut forest $\cF$ such that
every edge of $C_\perp$ is similarly titled inside the hull, leaving no ``safe'' attachment
edge for $B$?
The answer is {\sc yes}. We only sketch the argument before discussing in
more detail how to circumvent this counterexample.

Let $\C$ be a cap whose boundary $\bcC$ is a $12$-sided regular polygon (i.e.,
a dodecagon). For the construction to work, we need at least a $9$-sided polygon; $12$
makes it visually clearer.
The cut forest $\cF$ is as illustrated in Fig.~\figref{BaseCex_22}:
one tree in $\cF$ is a $2$-path from the center of $\C$,
and all the others trees are single segments. The key property is that each
segment creates a very shallow angle with $\bcC$.
\begin{figure}[htbp]
\centering
\includegraphics[width=0.7\linewidth]{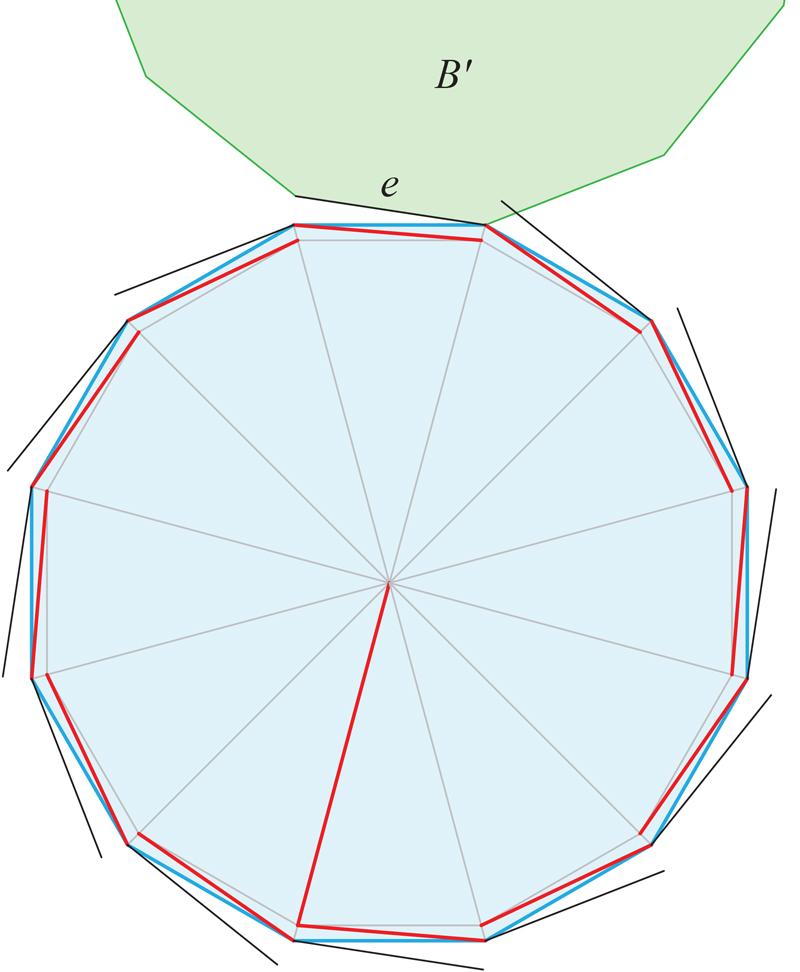}
\caption{No edge of $C_\perp$ is a convex hull edge. The cut forest $F$ is shown red,
$\bcC$ is blue, developed edges of $C_\perp$ black.}
\figlab{BaseCex_22}
\end{figure}

We arrange near-zero curvature at the central vertex, and all the other vertices have
the same curvature $\o > 0$. Because of the shallow angle they form with $\bcC$,
the opening gap caused by cutting each segment is nearly orthogonal to the cut segment.
With the internal angle of a $12$-gon $\frac{5}{6} \pi$, the exterior angle between
$B$ and a reflected copy $B'$ of $B$ is $\frac{2}{6} \pi = 60^\circ < 90^\circ$.
This allows the orthogonally jutting rotation of each boundary edge to penetrate into a
reflected $B'$, reflected about the next edge $e$ counterclockwise,
as illustrated.

There is no impediment to realizing this example in 3D so that the curvatures $\o$ suffice
to render every boundary edge of $C_\perp$ leading to overlap with $B'$. Moreover, this could
be accomplished for an arbitrarily flat $\C$ by increasing the number of sides of the $n$-gon base
so that even a very small $\o$ results in overlap.
These claims will not be justified further, as they only serve to motivate the next steps.

\subsection{Quadrant-based forest $F$}
\seclab{Quadrant-based}
The reason the preceding counterexample does not present an insurmountable
obstacle is that the spanning forest $F$ selected in the planar projection graph $G$ of $\C$
is not arbitrary, but instead is based on the quadrants
illustrated in Fig.~\figref{QuadGap}.
We now show that the quadrant-based forest $F$ leads to an edge $e$ of $C_\perp$
to which to attach the reflected base $B'$.

A reminder on the notation we are employing: $\C$ is the convex cap in $\mathbb{R}^3$,
$C$ is its projection on the $xy$-plane, $F$ the spanning forest in that plane,
$\cF$ the lift of the forest, and $C_\perp$ is the development of the cap $\C$ 
after cutting $\cF$.

Let $v \in \bC$ be a vertex on the boundary of the projection $C$ that is
a root of a tree in the forest $F$. Rather than
viewing the lift, cut, and development as producing $C_\perp$,
it will help to view the movement of $v$ in the plane caused by opening
the curvatures along the cut paths terminating at $v$.
As we saw in Section~8 
of~\cite{o-eunfcc-17},
we can view each cut path $Q$ to $v$ as two planar polygonal chains $L$ and $R$
which are initially identical, and then open at each vertex $v_i$ along
$Q$ by the curvatures $\o_i$. Here we are only interested in the final planar displacement of
the root $v$, the endpoint of the chain. Let $v$ and $v'$ be the original and displaced
versions of $v$, i.e., the last vertices of $L$ and $R$. The \emph{gap segment} $v v'$ represents
the gap at the boundary of $C_\perp$ caused by opening the cuts in $F$ to $v$,
visible, for example, in Fig.~\figref{AM_20_30_s2_v83_ell_Hull}.

The gap segment $v v'$ is caused by the composition of several (small) rotations about
different centers, the vertices along $Q$. It is well-known that $v v'$ is equivalent
to a single rotation about a (generally) different center $c$,
which we'll call the \emph{composite center of rotation}.
We claim that, for sufficiently small $\o_i$, $c$ is either inside the convex hull of $Q$,
or arbitrarily close to the boundary of the hull.
This claim is justified in the Appendix by
Lemma~\lemref{center-of-gravity}.

Returning to Fig.~\figref{BaseCex_22}, the centers of rotation in $F$ were
arranged so that the gap segments were nearly orthogonal to $\bC$,
so that they ``jutted out'' and caused overlap with the reflected $B'$.
We now argue first, that a different arrangement of centers of rotation
can produce ``safe'' gap segments, and second, that this can be achieved by
a quadrant-based spanning forest $F$. 

Arrange an edge $e=(v,u)$ of $\bC$ to be topmost and horizontal
and crossing the vertical quadrant boundary, and suppose
both $v$ and $u$ are roots of cut trees in $F$.
We call $e$ a locally \emph{safe edge} if the composite centers of rotation $c_v$ and $c_u$
for the trees incident to $v$ and $u$ fall underneath $e$, as illustrated in  
Fig.~\figref{SafeEdge}. For then the gap segments angle down below $e$,
making $e$ a safe candidate for the attachment of $B'$.
\begin{figure}[htbp]
\centering
\includegraphics[width=0.9\linewidth]{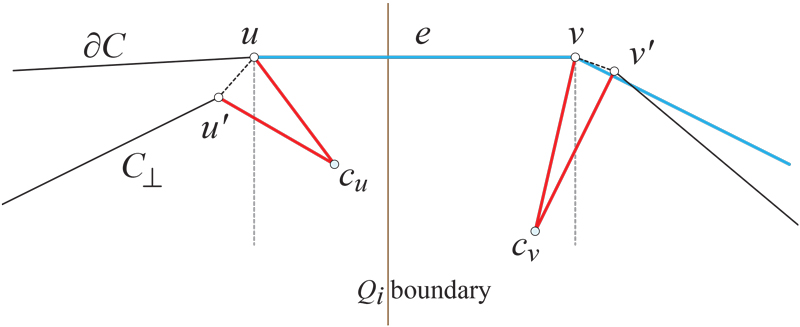}
\caption{Safe edge $e$.}
\figlab{SafeEdge}
\end{figure}

Returning to Fig.~\figref{QuadGap}, we now argue that what was there called the gap edge $g$ 
of $\bC$ can serve as a safe edge $e$.
As in that figure, let $q$ be the quadrants origin. The shortest path $\g$ on $\C$ to $\bcC$
is orthogonal to a boundary edge $e$. The development of the geodesic $\g$ is a straight
line. Use this straight line as the vertical axis of the quadrants, with
$e$ horizontal.
Now, because the spanning forest algorithm grows edges within the quadrant wedges,
we are guaranteed that all edges of a tree incident to the endpoints of $e$
are slanted such that they angle strictly vertically underneath $e$, as
illustrated in Fig.~\figref{QuadTrees}. The strictness follows because the wedges
are $\D\q$ less than $90^\circ$.
\begin{figure}[htbp]
\centering
\includegraphics[width=0.75\linewidth]{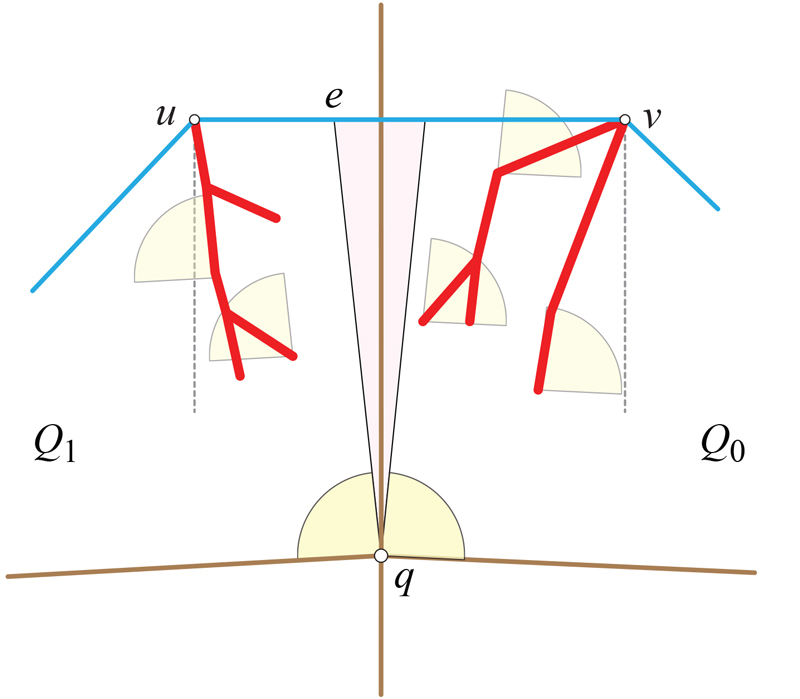}
\caption{The trees incident to the endpoints of $e$ have composite centers of rotation
underneath $e$.}
\figlab{QuadTrees}
\end{figure}

Now applying Lemma~\lemref{center-of-gravity}, we obtain that the
composite center of rotation is underneath $e$.
Even though that lemma allows the true center to be slightly outside the
hull of the rotation centers, that the wedges have angle less than $90^\circ$
permits the conclusion that the true center is in the hull for sufficiently small
$\o_i$, and therefore strictly underneath $e$.
Thus we know that $e$ is a locally safe edge.

But now it is easy to reapply this argument to conclude that none of the gap segments
for other vertices around $\bC$ are angled above the horizontal, and so $e$ is in
fact globally safe. Thus we can attach the reflected base $B'$ along $e$ without overlap.
This proves:
\begin{theorem}
A convex polyhedron consisting of a
a nearly flat acutely triangulated convex cap $\C$ joined along $\bcC$ to a
base $B$ can be edge-unfolded without overlap,
for sufficiently small cap curvature $\O$.
\thmlab{BaseSafe}
\end{theorem}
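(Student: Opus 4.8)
The plan is to carry out the ``flip-out'' reduction already sketched, but to lean on the special structure of the quadrant-based cut forest to guarantee that it actually succeeds. First run the algorithm of~\cite{o-eunfcc-17} to obtain the net $C_\perp$ of the cap $\C$ together with its cut forest $\cF$. We seek a boundary edge $e$ of $C_\perp$ such that $C_\perp$ lies entirely in one of the closed half-planes bounded by the line through $e$; call such an $e$ \emph{globally safe}. Given such an $e$, cut every edge of $\bcC$ except $e$ and reflect $B$ across the line through $e$ to a copy $B'$. Then $B'$ is convex and lies in the complementary closed half-plane, so it meets $C_\perp$ only along $e$ and the resulting planar figure is a net; this proves the theorem. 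Since the construction of Section~\secref{Obstructions} exhibits arbitrarily flat caps for which some cut forests admit no safe edge, the entire argument rests on the fact that $\cF$ is the lift of the quadrant forest $F$ of Section~\secref{SpanningForest}.

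To pin down $e$, take $q$ to be the interior vertex used as the apex of the near-quadrants --- the vertex closest to $\bcC$ --- and let $\g$ be the shortest geodesic from $q$ to $\bcC$, which meets $\bcC$ orthogonally at a boundary edge $e=(v,u)$. Develop $\g$ to a straight segment, use it as the vertical axis with $q$ at the bottom, and orient the frame so that $e$ is horizontal; because $e$ supports the convex projection $C$, we may arrange that $C$ --- hence $C_\perp$ up to an $O(\O)$ perturbation --- lies below the line through $e$. The gap cone $g$ then opens upward along $\g$ toward $e$, straddling the vertical axis (as in Fig.~\figref{QuadGap}), each near-quadrant $Q_i$ has aperture $90^\circ-\D\q$, and $v$ lies to the left of the axis and $u$ to the right, both outside $g$ so that both may be roots of trees of $F$.

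Next I would show $e$ is \emph{locally safe}, i.e.\ that the gap segments $vv'$ and $uu'$ do not rise above the line through $e$ (Fig.~\figref{SafeEdge}). For a boundary vertex $w$ that roots a tree $T_w$, all edges of $T_w$ lie in the near-quadrant associated with $w$, a wedge of aperture $90^\circ-\D\q$; applied to $v$ and to $u$, this forces every edge of $T_v$ incident to $v$ (and likewise for $u$), and hence every vertex of $T_v$ other than $v$ --- in particular every center of the elementary rotations that open the cuts to $v$ --- to lie strictly below the line through $e$, the deficit $\D\q=90^\circ-\q$ supplying the strictness with a definite margin. Lemma~\lemref{center-of-gravity} then locates the composite center of rotation $c_v$ (and $c_u$) inside, or arbitrarily close to, the convex hull of those centers; since that hull lies at positive distance below the line through $e$, a distance controlled by $\D\q$ and independent of $\O$, for $\O$ small enough $c_v$ and $c_u$ are themselves strictly below that line. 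Finally, because every curvature $\o_i>0$ the opening rotations all turn in one consistent sense, and combined with $c_v$ (resp.\ $c_u$) lying strictly below $e$ this forces $vv'$ (resp.\ $uu'$) to angle strictly downward from the line through $e$; hence $e$ is locally safe.

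Global safety then follows by the same argument plus a perturbation estimate: every other root $w$ lies strictly below the line through $e$ (as $e$ is topmost), and the quadrant bound together with Lemma~\lemref{center-of-gravity} keeps the gap segment at $w$ from pointing above the horizontal, so for $\O$ small the whole boundary of $C_\perp$ stays in the closed half-plane below the line through $e$; thus $e$ is globally safe and the flip-out above finishes the proof. The step I expect to be the main obstacle is precisely the passage from ``composite center strictly below $e$'' to ``gap segment angled strictly below $e$'': this is where the quadrant wedge bound, Lemma~\lemref{center-of-gravity}, the common sign of the curvatures, and the requirement that $\O$ be small relative to the $\D\q$-margin must all be combined, and it is also where one must verify carefully that the boundary edge $e$ singled out by the geodesic rule can genuinely be taken to be the topmost edge of $C_\perp$.
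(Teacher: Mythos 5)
Your proposal follows essentially the same route as the paper: the flip-out of $B'$ across a globally safe edge, the choice of $e$ as the boundary edge struck orthogonally by the developed geodesic from the quadrant apex $q$, the quadrant-wedge bound placing all rotation centers of the trees at $v$ and $u$ strictly below $e$ with a $\D\q$-margin, Lemma~\lemref{center-of-gravity} (with $\O$ small relative to that margin) to place the composite centers $c_v,c_u$ below $e$, and then the reapplication of the argument to all other roots for global safety. The step you flag as the main obstacle --- passing from ``composite center below $e$'' to ``gap segment angled below $e$,'' and the topmost placement of $e$ --- is exactly where the paper's own treatment is also brief, so your account matches its level of rigor as well as its structure.
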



\noindent
Using the error between the true and approximate composite rotation centers
$\d=\frac{1}{2} \sum_i \ell_i \o_i$ from Lemma~\lemref{RotationsCG},
and crudely summarizing this as $\d=L \o / 2$ for a total chain length $L$,
a calculation shows the wedge slant $\D\q$ leads to 
``sufficiently small'' curvatures 
satisfied if $\o \lesssim 2 \D\q$.
But already we know that 
$$
\O < \pi \F^2 < \pi (0.3 \sqrt{\D\q} )^2 \approx 0.28 \D\q \;,
$$
and because $\o < \O$ for any one tree of $F$, the curvatures are already
``sufficiently small'' from other constraints.

An illustration is shown in Fig.~\figref{CB_10_20_s3_3D_Unf}.
The selection of $e$ in this example does not follow the proof exactly
just due to limitations of my implementation
($q$ is not closest to $\bcC$ and $e$ is not orthogonal to the vertical quadrant axis), but it illustrates how $e$ is locally
and indeed globally safe.
(In this and in most examples, there are many safe edges.)
\begin{figure}[htbp]
\centering
\includegraphics[width=1.0\linewidth]{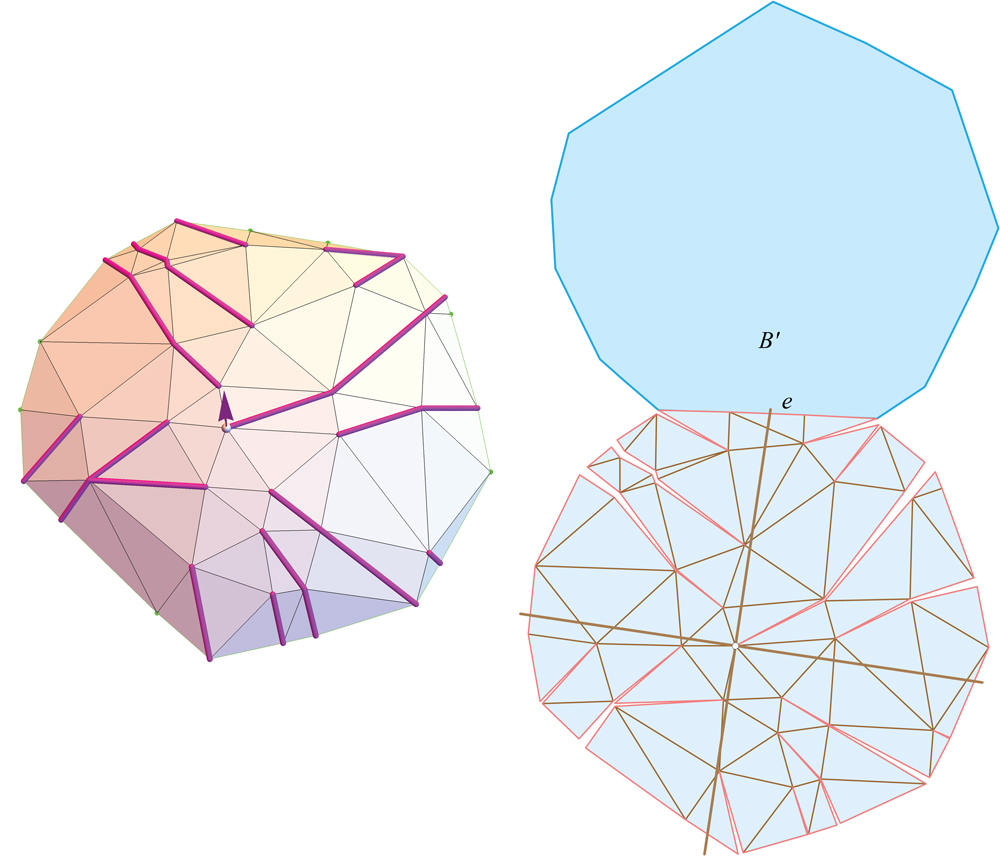}
\caption{Cap $\C$ (left) and an edge-unfolding (right), 
including base $B$ flipped across safe edge $e$.}
\figlab{CB_10_20_s3_3D_Unf}
\end{figure}

\newpage
\section*{Appendix}
We need a lemma that allows us to conclude that, for small curvatures,
the effect of the rotations along a cut path $Q$ to a boundary vertex $v \in \bC$
is equivalent to one rotation from a point in the convex hull of the vertices along $Q$.

\begin{lemma}
Let $R_i(\o_i,p_i)$ be a two-dimensional rotation by angle $\o_i \ge 0$
about point $p_i$, for $i=1,\ldots,k$.
Then, for sufficiently small $\o_i$,
the result of composing the $k$ rotations $R_i$ is 
equivalent to one rotation about a 
\emph{center-of-gravity} rotation center: 
the sum of the $p_i$
weighted by the angles:
$$ 
R_1(\e \o_1,p_1) \circ \cdots \circ R_k(\e \o_k,p_k)
\rightarrow R(\e \o,p)
$$
as $\e \to 0$, where
$$\o = \sum_i \o_i $$
and
$$p = (\o_1 p_1 + \cdots + \o_k p_k) /  \o \;.$$ 
\lemlab{RotationsCG}
\end{lemma}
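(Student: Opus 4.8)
The plan is to pass to the infinitesimal picture and read off the composite center from a first-order expansion in $\e$.

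First I would write each rotation as an affine map of $\R^2$. Letting $A_\a$ denote the linear rotation by angle $\a$, we have $R_i(\o_i,p_i)\colon x\mapsto A_{\o_i}(x-p_i)+p_i = A_{\o_i}x+(I-A_{\o_i})p_i$. Since planar rotation matrices commute and add angles, the composition $R_1(\o_1,p_1)\circ\cdots\circ R_k(\o_k,p_k)$ is an isometry whose linear part is $A_{\o}$ with $\o=\sum_i\o_i$. Because every $\o_i\ge 0$, once the $\o_i$ are small enough that $\o\in(0,2\pi)$ — the all-zero case being trivial — this linear part is not the identity, so the composition is a genuine rotation by the \emph{exact} angle $\o$, and its center is its unique fixed point $c$. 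Hence the composite always has exactly the angle claimed for the right-hand side, and the lemma reduces to showing that $c\to p$ as the angles shrink.

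Next I would introduce the scaling $\o_i\mapsto\e\o_i$ and Taylor-expand about $\e=0$. Writing $J$ for the infinitesimal generator of rotations (the $90^\circ$ rotation matrix), $A_{\e\o_i}=I+\e\o_i J+O(\e^2)$, so $R_i(\e\o_i,p_i)$ sends $x$ to $x+\e\o_i J(x-p_i)+O(\e^2)$. Composing, and using that each factor displaces its argument by only $O(\e)$ so that all cross terms are $O(\e^2)$, the composite map sends $x$ to $x+\e J\!\left(\o x-\sum_i\o_i p_i\right)+O(\e^2)$; equivalently it equals $x\mapsto A_{\e\o}x+t(\e)$ with translation part $t(\e)=-\e J\sum_i\o_i p_i+O(\e^2)$.

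Finally I would solve for the fixed point. From $(I-A_{\e\o})\,c(\e)=t(\e)$ together with $I-A_{\e\o}=-\e\o J+O(\e^2)$, dividing by $\e$ gives $\left(-\o J+O(\e)\right)c(\e)=-J\sum_i\o_i p_i+O(\e)$. Sending $\e\to0$ and using that $J$ is invertible, $\o\,c=\sum_i\o_i p_i$, i.e.\ $c(\e)\to p=(\o_1 p_1+\cdots+\o_k p_k)/\o$. Combined with the fact that the composite's angle is exactly $\e\o$, this is precisely the asserted convergence $R_1(\e\o_1,p_1)\circ\cdots\circ R_k(\e\o_k,p_k)\to R(\e\o,p)$. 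The delicate point is the bookkeeping of the $O(\e^2)$ remainders: one must verify that $t(\e)/\e$ has a limit and that the limiting linear system for $c$ is non-degenerate — which is exactly why the hypothesis forces $\o>0$ and why invertibility of $J$ is used. Carrying the expansion one further order, and bounding the spread $|p_i-p_j|$ of the rotation centers along a cut chain by the chain lengths $\ell_i$, should additionally recover the quantitative estimate $\d=\tfrac12\sum_i\ell_i\o_i$ for $|c-p|$ quoted after Theorem~\thmref{BaseSafe}.
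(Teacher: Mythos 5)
Your argument is correct, but it takes a genuinely different route from the paper. The paper proves the two-rotation case synthetically: following Needham, each rotation $R_i(\o_i,p_i)$ is written as a product of reflections in lines through $p_i$ separated by $\o_i/2$, the composite center $c$ is computed explicitly as the intersection of two reflection lines (with $p_1=(0,0)$, $p_2=(1,0)$), and the error $\d=|c-p|$ is shown to be about $\tfrac18(\o_1+\o_2)$ for small angles; the $k$-rotation case is then handled by repeating the argument, with a calculation (not shown) giving $\d \le \tfrac12\sum_i \ell_i\o_i$. You instead work with the affine representation $x\mapsto A_{\o_i}x+(I-A_{\o_i})p_i$, observe that the composite's linear part is exactly $A_{\e\o}$, expand to first order in $\e$, and solve $(I-A_{\e\o})c(\e)=t(\e)$ for the fixed point, obtaining $c(\e)\to p$ directly for all $k$ at once. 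Your route is cleaner and avoids the pairwise iteration and the unshown $k$-fold calculation; the paper's route, in exchange, produces the explicit quantitative bound $\d=\tfrac12\sum_i\ell_i\o_i$, which is what the main text actually uses after Theorem~\thmref{BaseSafe} to quantify ``sufficiently small'' curvature, whereas your proposal only sketches how a second-order expansion ``should recover'' it. One small point to tighten: when you divide by $\e$ and let $\e\to 0$ in $(-\o J+O(\e))\,c(\e)=-J\sum_i\o_i p_i+O(\e)$, you should first note that $c(\e)$ stays bounded (e.g.\ by inverting $I-A_{\e\o}=-\e\o J(I+O(\e))$ explicitly, or by observing that an unbounded $c(\e)$ would make the left side blow up while the right side is bounded); you flag this delicacy yourself, and it is easily supplied, so it is not a genuine gap.
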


\noindent
The role of $\e$ is to ensure all the angles approach $0$.
Equivalently (and more appropriate in our context), we can just think of
the $\o_i$ as ``sufficiently small.''
This proposition is illustrated in Fig.~\figref{RotationCG}
for a polygonal chain.
\begin{figure}[htbp]
\centering
\includegraphics[width=1.0\linewidth]{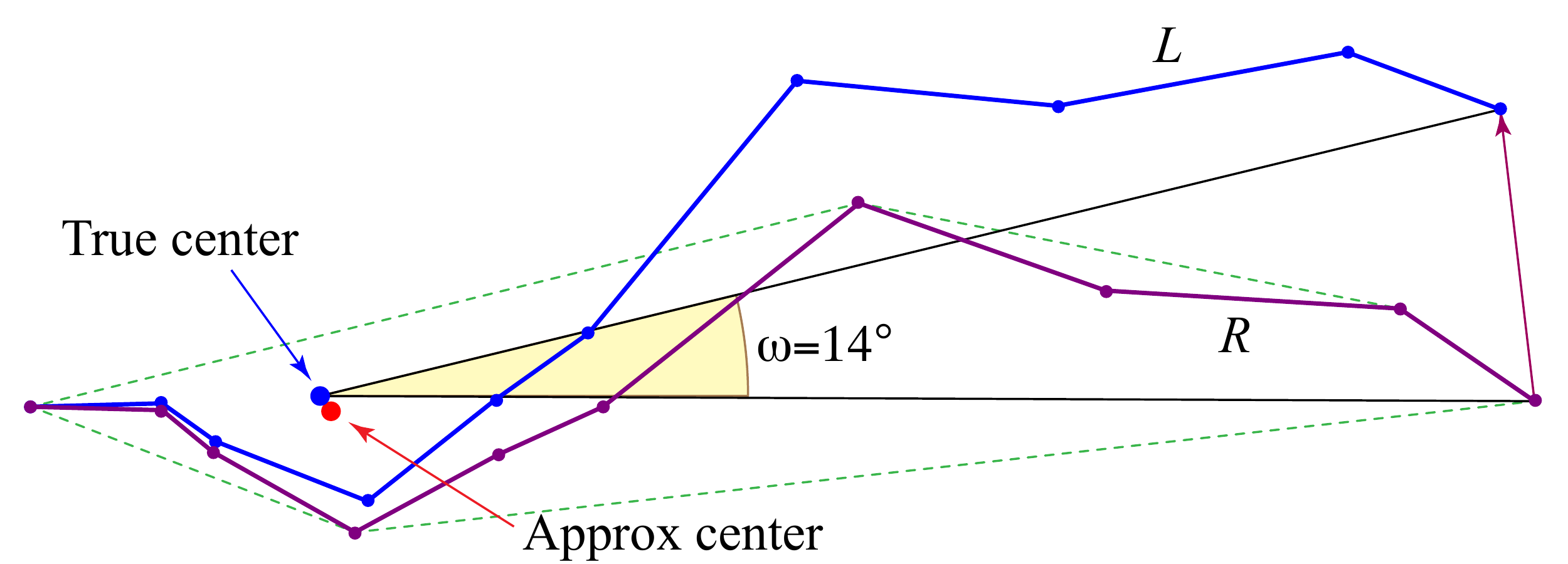}
\caption{Comparison of true composite center of rotation, and the 
approximate center-of-gravity center. Here $R$ is fixed and $L$ obtained
by $\o_i$ rotations.}
\figlab{RotationCG}
\end{figure}

\noindent
\begin{proof}
It is well-known that the composition of two rotations by angles $\o_1, \o_2$
about different centers $p_1, p_2$ is equivalent to
one rotation by $\o_1 + \o_2$ about a (generally) different center $c$.\footnote{
Unless $\o_1 + \o_2 = 2 \pi$, which will never occur with small rotations.}
Consequently, the same holds for the composition of $k$ rotations.
We now prove that as $\o_1, \o_2$ approach $0$, the center $c$ approaches the point
$p=(\o_1 p_1 + \o_2 p_2) / (\o_1 + \o_2)$ on the $p_1 p_2$ segment.
Following~\cite[p.38]{n-vca-98}, we view the rotations by $\o_1, \o_2$
as reflections in lines separated by $\o_1 /2, \o_2 /2$.
Then $c$ is the intersection of two reflection lines, as illustrated in Fig.~\figref{CompRotErr}.
With $p_1=(0,0)$ and $p_2=(1,0)$, explicit calculation yields
$$c = \left( \,
\frac{\sin \o_2}{\sin \o_1 + \sin \o_2} \, , \, 
\frac{\sin \o_1 \sin \o_2}{\sin \o_1 + \sin \o_2}  \, 
\right) \;.
$$
From this expression and that for $p$ above, futher calculation shows
that the error $\d = | c - p |$ is $\frac{1}{8} (\o_1 +  \o_2)$ for small $\o_i$.
So indeed $\d$ approaches zero.

Repeating the argument for $k$ rotations yields 
(via a calculation not shown here)
that the error $\d$ is bounded by $\frac{1}{2} \sum_i \ell_i \o_i$,
where $\ell_i=| p_{i+1}-p_i |$ are the link lengths of the chain,
as $\o_i \to 0$.
Thus, $\d \to 0$, $c$ approaches $p$, and
the claim of the lemma is established.
\end{proof}

\begin{figure}[htbp]
\centering
\includegraphics[width=0.75\linewidth]{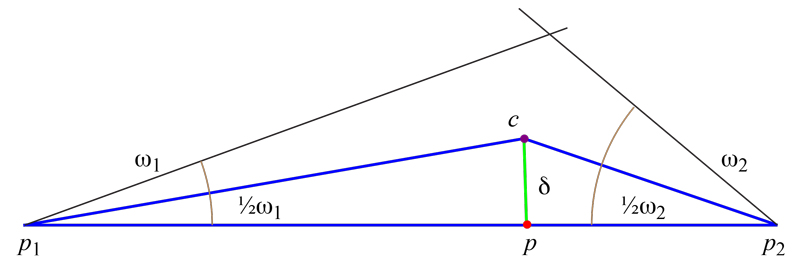}
\caption{The error $\d$ between the true composite center $c$ and the center-of-gravity center.}
\figlab{CompRotErr}
\end{figure}

\noindent
An immediate implication of Lemma~\lemref{RotationsCG} is:
\begin{lemma}
Under the same assumptions, the center-of-gravity
approximate center $p$ 
approaches a point in the convex hull of $\{ p_1, \ldots, p_k \}$
as $\e \to 0$, or equivalently, as $\o \to 0$.
\lemlab{center-of-gravity}
\end{lemma}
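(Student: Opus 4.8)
The plan is to read the conclusion directly off the definition of $p$, and then to upgrade it to a statement about the \emph{true} composite center using Lemma~\lemref{RotationsCG}.

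First I would observe that $p$ is, by its very definition, a convex combination of $p_1,\ldots,p_k$: setting $\l_i = \o_i/\o$ with $\o = \sum_i \o_i$, we have $\l_i \ge 0$ because each rotation angle $\o_i \ge 0$, and $\sum_i \l_i = 1$, so $p = \sum_i \l_i p_i \in \mathrm{conv}\{p_1,\ldots,p_k\}$. This holds for every choice of (nonnegative) angles and is independent of $\e$, so $p$ is in fact always equal to a point of the convex hull; in particular it ``approaches'' one as $\e\to 0$. The only degenerate case is $\o = 0$, i.e. all $\o_i = 0$, when every $R_i$ and hence the whole composition is the identity and the statement is vacuous.

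Then I would combine this with Lemma~\lemref{RotationsCG}, which gives that the true composite center $c = c(\e)$ of $R_1(\e\o_1,p_1)\circ\cdots\circ R_k(\e\o_k,p_k)$ satisfies $|c(\e)-p| \le \d(\e)$ with $\d(\e) = \tfrac{\e}{2}\sum_i \ell_i \o_i \to 0$, where the $\ell_i$ are the link lengths of the chain. Hence the distance from $c(\e)$ to $\mathrm{conv}\{p_1,\ldots,p_k\}$ is at most $\d(\e)$, which tends to $0$; this is precisely the form used in Section~\secref{Quadrant-based}, where $c$ must lie in, or arbitrarily close to the boundary of, the hull of the vertices along the cut path $Q$ (those vertices being exactly the rotation centers $p_i$). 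There is essentially no obstacle: once $\o_i \ge 0$ is in hand, the membership $p \in \mathrm{conv}\{p_i\}$ is immediate, and the only quantitative input — the convergence $c \to p$ with an explicit error bound — is already supplied by Lemma~\lemref{RotationsCG}. The single point deserving a moment's care is the sign condition on the $\o_i$, which legitimizes the weights $\l_i$ as convex-combination coefficients; in our setting the $\o_i$ are vertex curvatures of a convex cap, hence nonnegative, so this is automatic.
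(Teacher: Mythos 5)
Your core argument is exactly the paper's: since $\o_i \ge 0$ and the weights $\o_i/\o$ sum to $1$, the point $p$ is a convex combination of the $p_i$ and hence already lies in their convex hull. The extra paragraph about the true center $c$ being within $\d$ of the hull is a correct (and useful) gloss on how the lemma is applied in Section~\secref{Quadrant-based}, but it is not needed for the statement itself, which the paper also proves by the same one-line convexity observation.
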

\begin{proof}
With $\o_i \ge 0$, the weighted sum in Lemma~\lemref{RotationsCG}
is a convex combination of the $p_i$ points,
and so inside (or on the boundary of) the convex hull.
\end{proof}


\newpage 
\bibliographystyle{alpha}
\bibliography{CCapEUnf}

\begin{thebibliography}{O'R17}

\bibitem[LO17]{lo-ampnt-17}
Anna Lubiw and Joseph O'Rourke.
\newblock Angle-monotone paths in non-obtuse triangulations.
\newblock In {\em Proc. 29th Canad. Conf. Comput. Geom.}, August 2017.
\newblock arXiv:1707.00219 [cs.CG]: \url{https://arxiv.org/abs/1707.00219}.

\bibitem[Nee98]{n-vca-98}
Tristan Needham.
\newblock {\em Visual Complex Analysis}.
\newblock Oxford University Press, 1998.

\bibitem[O'R17]{o-eunfcc-17}
Joseph O'Rourke.
\newblock Edge-unfolding nearly flat convex caps.
\newblock arXiv:1707.01006v2 [cs.CG]. \url{http://arxiv.org/abs/1707.01006},
  2017.

\end{thebibliography}
\end{document}